\long\def\ignore#1{\relax}
\newcommand\struto[1][15pt]{{\raise #1 \hbox{\strut}}}%
\newcommand\strutb[1][15pt]{{\raise-#1 \hbox{\strut}}}%
\newcommand\upline{\hline\struto[12pt]}
\newcommand\midline[1][4]{\\[+ #1pt]\hline\struto[#1pt]}
\newcommand\downline[1][12]{\\[+ #1pt]\hline}
\renewcommand\[[1][1]{\par\removelastskip\vskip#1pt\vbox\bgroup\null\hfill$}
\renewcommand\][1][3]{$\hfill\null\egroup\par\removelastskip\vskip#1pt\noindent}
\newenvironment{centre}{\par\vbox\bgroup\null\hfill}{\hfill\null\egroup\par\vskip2pt}
\newcommand{\eqdef}{:=\ }
\newcommand{\recdef}{::=\ }
\newcommand{\Gam}{\Gamma}
\newcommand{\Del}{\Delta}
\newcommand\LKF{\textsf{LKF}}
\newcommand\mathFomega{F_\omega}
\newcommand\Fomega{\ifmmode\mathFomega\else$\mathFomega$\fi}
\newcommand\mathFomegaC{F_\omega^{\mathcal C}}
\newcommand\FomegaC{\ifmmode\mathFomegaC\else$\mathFomegaC$\fi}
\newcommand\mathDNE{\mathrm{DNE}}
\newcommand\DNE{\ifmmode\mathDNE\else$\mathDNE$\fi}
\def\url#1#2{\texttt{#1}}
\newcommand\monthdisplay[1]{}
\newcommand\FV[2][{}]{\textsf{FV}_{#1}(#2)}
\newcommand{\subst}[3]{ \left\{{}^{#3}\hspace{-6pt}\diagup\hspace{-2pt}_{#2} \right\}\hspace{-1pt} #1 }
\newcommand{\seqg}[3]{\mbox{$\ {#1}_{#2}^{#3}\ $}}
\newcommand{\seqf}[2][]{\seqg{\vdash}{#1}{#2}}
\newcommand{\seq}[1][]{\seqf[#1]{}}
\newcommand\Seq[3][]{#2\seq[{#1}] #3}
\newcommand\DerOSPos[3][]{\Seq[#1]{#2}{[#3]}}
\newcommand\DerOSNeg[3][]{\Seq[#1]{#2}{#3}}
\newcommand\DerNeg[4]{{#1}   \seqf{#4}   {#2}}
\newcommand\DerPosTh[5][{\mathcal T}]{{#2}   \seqf[#1]{#5}   {[#3]}}
\newcommand\DerNegTh[5][{\mathcal T}]{{#2}   \seqf[#1]{#5}   {#3}}
\newcommand\cut{\textsf{cut}}
\newcommand\daggerL{\raise3pt\hbox{\rotatebox{-40}{$\dagger$}}}
\newcommand\daggerR{\raise0pt\hbox{\rotatebox{40}{$\dagger$}}}
\newcommand\andP{{\wedge^+}}
\newcommand\andN{{\wedge^-}}
\newcommand\orP{{\vee^+}}
\newcommand\orN{{\vee^-}}
\newcommand\EX[2]{\exists #1 #2}
\newcommand\FA[2]{\forall #1 #2}
\newcommand{\non}[1]{{#1}^{\perp}}
\newcommand\Theory[2][\mathcal T]{{#1}(#2)}
\newcommand\LKTh[1][\mathcal T]{\textsf{LK}($#1$)}
\newcommand\LKThp[1][\mathcal T]{\textsf{LK}$^p$($#1$)}
\newcommand\DPLLTh{\textsf{DPLL}($\mathcal T$)}
\newcommand\atmCtxt[1]{\textsf{lit} (#1)}
\begin{document}
\title{A sequent calculus with procedure calls}

\author{Mahfuza Farooque${}^1$, Stéphane Lengrand${}^{1,2}$\\[15pt]
  ${}^1$ CNRS\\
  ${}^2$ Ecole Polytechnique\\
  Project PSI: ``Proof Search control in Interaction with domain-specific methods''\\
  ANR-09-JCJC-0006
}

\maketitle
\abstract{
  In this paper, we extend the sequent calculus \LKF~\cite{liang09tcs} into a calculus \LKTh, allowing calls to a decision procedure. We prove cut-elimination of \LKTh.
}

\tableofcontents

\newpage

\section{The sequent calculus \LKTh}

The sequent calculus \LKTh\ manipulates the formulae of first-order logic, with the specificity that every predicate symbol is classified as either positive or negative, and boolean connectives come in two versions: positive and negative.

\begin{definition}[Formulae]\strut
  \emph{Literals} are predicates (a predicate symbol applied to a list of first-order terms) or negations of predicates. Literals are equipped with the obvious involutive negation, and the negation of a literal $l$ is denoted $\non l$.

  Let $\mathcal P$ be the set of literals that are either predicates with positive predicate symbols, or negations of predicates with negative predicate symbols.

  \[
  \begin{array}{lll}
    \mbox{Positive formulae }&P&\recdef p\mid A\andP B \mid A\orP B \mid \EX x A\\
    \mbox{Negative formulae }&N&\recdef \non p\mid A\andN B \mid A\orN B \mid \FA x A\\
    \mbox{Formulae }&A,B&\recdef P\mid N
  \end{array}
  \]
where $p$ ranges over $\mathcal P$.
\end{definition}

\begin{definition}[Negation]Negation is extended from literals to all formulae:
  \[
  \begin{array}{|ll|ll|}
    \hline
    \non {(p)}&\eqdef \non p& \non {(\non p)}&\eqdef p  \\
    \non{(A\andP B)}&\eqdef \non A \orN\non B&\non{(A\andN B)}&\eqdef \non A \orP\non B    \\
    \non{(A\orP B)}&\eqdef \non A \andN\non B&\non{(A\orN B)}&\eqdef \non A \andP\non B    \\
    \non{(\EX x A)}&\eqdef \FA x {\non A}& \non{(\FA x A)}&\eqdef \EX x {\non A}   \\
    \hline
  \end{array}
  \]
\end{definition}

\begin{definition}[\LKTh]

  The sequent calculus \LKThp\ manipulates two kinds of sequents:
  \begin{centre}
    \begin{tabular}{ l l }
      \mbox{Focused sequents }&
      $\DerPosTh \Gamma P {}{}$ \\
      \mbox{Unfocused sequents }&
      $\DerNegTh \Gamma {\Del}{}{}$
    \end{tabular}
  \end{centre}
  where $\Gam$ is a multiset of negative formulae and positive literals, $\Delta$ is a multiset of formulae, and \emph{P} is said to be in the \emph{focus} of the (focused) sequent.
By $\atmCtxt\Gam$ we denote the sub-multiset of $\Gam$ consisting of its 
 literals.

  The rules of \LKThp, given in Figure~\ref{fig:LKTh}, are of three kinds: synchronous rules, asynchronous rules, and structural rules. These correspond to three alternating phases in the proof-search process that is described by the rules.

  If $S$ is a set of literals, $\Theory S$ is the call to the decision procedure on the conjunction of all literals of $S$. It holds if the procedure returns \textsf{UNSAT}.
\end{definition}

\begin{figure}
  $$
  \begin{array}{|c|}
    \upline
    \mbox{\textsf{Synchronous rules}}
    \hfill\strut\\
    \infer{\DerOSPos{\Gamma}{A\andP B}}
    {\DerOSPos{\Gamma}{A} \qquad \DerOSPos{\Gamma}{B}}
    \qquad
    \infer{\DerOSPos{\Gamma}{A_1\orP A_2}}
    {\DerOSPos{\Gamma}{A_i}}
    \qquad
    \infer{\DerOSPos{\Gamma}{\EX x A}}
    {\DerOSPos{\Gamma}{\subst A x t}}
    \\\\
    \infer[p \mbox{ positive literal}]
    {\DerOSPos{\Gamma,p}{p}}
    {\strut}
    \qquad
    \infer[p \mbox{ positive literal}]{\DerOSPos{\Gamma}{p}}
    {\Theory{\atmCtxt\Gam,\non p}}
    \\\\
    \infer[N\mbox{ negative}]{\DerOSPos {\Gam} {N} }{\DerOSNeg {\Gam} {N} }
    \midline
    \mbox{\textsf{Aynchronous rules}}\hfill\strut\\
    \infer{\DerOSNeg{\Gamma}{A\andN B,\Delta}}
    {\DerOSNeg{\Gamma}{A,\Delta} \qquad \DerOSNeg
      {\Gamma}{B,\Delta}}
    \qquad
    \infer{\DerOSNeg {\Gamma} {A_1\orN A_2,\Delta} }
    {\DerOSNeg {\Gamma} {A_1,A_2,\Delta} }
    \qquad
    \infer[x\notin\FV{\Gam,\Delta}]{\DerOSNeg
      {\Gamma}{(\FA x A),\Delta}}
    {\DerOSNeg {\Gamma} {A,\Delta}}\\\\
    \infer[A\mbox{ positive or literal}]{\DerOSNeg \Gam {A,\Delta}} {\DerOSNeg {\Gam,\non A} {\Delta}}
    \midline
    \mbox{\textsf{Structural rules}}\hfill\strut\\
    \infer[\begin{array}l P \mbox{ positive}
    \end{array}]{\DerOSNeg {\Gam,\non P} {}} {\DerOSPos {\Gam,\non P} {P} }
    \qquad
    \infer{\DerOSNeg {\Gam} {}}{\Theory {\atmCtxt\Gam}}
    \downline
  \end{array}
  $$
  \caption{System \LKTh}
  \label{fig:LKTh}
\end{figure}

\section{Admissible rules}

\begin{definition}[Assumptions on the procedure]\strut

  We assume that the procedure calls satisfy the following properties:
  \begin{itemize}
  \item[Weakening] If $\Theory S$ then $\Theory{S,S'}$.
  \item[Contraction] If $\Theory{S,A,A}$ then $\Theory{S,A}$.
  \item[Instantiation] If $\Theory S$ then $\Theory{\subst S x t}$.
  \item[Consistency] If $\Theory {S,p}$ and $\Theory {S,\non p}$ then $\Theory{S}$.
  \end{itemize}
  where $S$ is a set of literals.
\end{definition}

\begin{lemma}[Admissibility of weakening and contraction]\strut

  The following rules are admissible in \LKTh. 
  \[\begin{array}{c}
    \infer{\DerOSPos {\Gam,A}{B}}{\DerOSPos \Gam B}
    \qquad
    \infer{\DerOSNeg {\Gam,A}{\Delta}}{\DerOSNeg \Gam \Del}\\\\
    \infer{\DerOSPos {\Gam,A}{B}}{\DerOSPos {\Gam,A,A} B}
    \qquad
    \infer{\DerOSNeg {\Gam,A}{\Delta}}{\DerOSNeg {\Gam,A,A} \Del}
  \end{array}
  \]
\end{lemma}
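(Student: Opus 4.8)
The plan is to prove the four rules admissible by induction on the structure of the derivation in \LKTh\ of the premise, treating weakening and contraction separately. Since the structural rule $\Gamma,\non P\vdash{}$ and the store rule pass between focused and unfocused sequents, each proof is really a single induction ranging over both sequent shapes at once; in every case I apply the induction hypothesis to the premises of the last rule and then re-apply that same rule. The only leaves that do not reduce to a purely structural manipulation are the axiom $\Gamma,p\vdash[p]$, which stays an axiom after a context formula is added or two copies are merged, and the three rules whose premise is a decision-procedure call: the second synchronous rule for positive literals and the structural rule $\Gamma\vdash{}$.

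For the procedure leaves the work is delegated to the assumed properties of $\Theory{\cdot}$. The key observation is that $\atmCtxt{\Gamma,A}$ equals $\atmCtxt\Gamma$ when $A$ is not a literal and equals $\atmCtxt\Gamma,A$ when $A$ is a literal. Hence for weakening the premise $\Theory{\atmCtxt\Gamma,\non p}$ (resp.\ $\Theory{\atmCtxt\Gamma}$) yields the required $\Theory{\atmCtxt{\Gamma,A},\non p}$ (resp.\ $\Theory{\atmCtxt{\Gamma,A}}$) directly by the Weakening property of the procedure; and for contraction the Contraction property collapses the two literal copies when $A$ is a literal, while nothing is to be done when $A$ is not a literal, since then the two literal contexts already coincide.

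The only genuine difficulty is the eigenvariable side-condition of the $\forall$ rule in the weakening case: from a derivation ending in $\Gamma\vdash(\FA x B),\Delta$ with $x\notin\FV{\Gamma,\Delta}$ I want $\Gamma,A\vdash(\FA x B),\Delta$, yet $x$ may occur free in the freshly added $A$. To handle this I will first establish an auxiliary renaming lemma — if a sequent is derivable then so is its image under a fresh-variable renaming — proved by the same form of induction, the procedure leaves being handled by the Instantiation property of $\Theory{\cdot}$ (renaming being a special case of substitution). Using it I rename the eigenvariable to some $z$ fresh for $\Gamma,A,B,\Delta$ in the subderivation of the premise, apply the induction hypothesis to weaken by $A$, and close with the $\forall$ rule on $z$; up to $\alpha$-equivalence the conclusion is the desired $(\FA x B)$. (Alternatively, adopting the Barendregt convention makes the clash disappear by fiat.) Contraction raises no such issue: the original side-condition $x\notin\FV{\Gamma,A,A,\Delta}$ is literally the condition $x\notin\FV{\Gamma,A,\Delta}$ needed after merging the copies, so the $\forall$ rule is simply re-applied. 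I expect this renaming step to be the main obstacle, everything else being a routine re-application of rules under the induction hypothesis.
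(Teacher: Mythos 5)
Your proposal is correct and takes essentially the same route as the paper, whose entire proof is the one line ``by induction on the derivation of the premiss'': you have simply filled in the details, rightly discharging the procedure-call leaves via the assumed Weakening and Contraction properties of $\Theory{S}$ and handling the eigenvariable clash by renaming (note that the paper's instantiation lemma already provides your auxiliary renaming lemma, since renaming is a substitution). The only fine point worth recording is that for the $\forall$ case your renaming must be height-preserving (or the induction taken on derivation height) so that the induction hypothesis still applies to the renamed subderivation --- a standard observation, and your fallback to the variable convention sidesteps it entirely.
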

\begin{proof}
  By induction on the derivation of the premiss.
\end{proof}

\begin{lemma}[Admissibility of instantiation]  The following rules are admissible in \LKTh. 
  \[
  \infer{\DerOSPos {\subst \Gam x t}{\subst B x t}}{\DerOSPos \Gam B}
  \qquad
  \infer{\DerOSNeg {\subst \Gam x t}{\subst \Del x t}}{\DerOSNeg \Gam \Del}
  \]
\end{lemma}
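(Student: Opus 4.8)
The plan is to prove both rules simultaneously by induction on the derivation of the premiss. A simultaneous induction is needed because the rules of \LKTh\ cross freely between focused and unfocused sequents: the synchronous rule for a negative focus passes from a focused to an unfocused sequent, the decide rule passes from a focused sequent back to an unfocused one, and the store rule relocates a formula into the context. For each inference rule I would apply the corresponding rule to the substituted sequents, discharging its premisses by the induction hypothesis.

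The propagation cases --- the boolean connectives $\andP$, $\orP$, $\andN$, $\orN$, together with the store rule, the decide rule, and the rule switching from a negative focus --- are routine, since $\subst{\cdot} x t$ commutes with all the connectives. After substitution the premisses are exactly the substituted instances to which the induction hypothesis applies, and re-applying the same rule rebuilds the conclusion. These cases only require the elementary observations that $\subst{(\non A)} x t = \non{(\subst A x t)}$, that substitution preserves the positive/negative classification of a formula (so that side conditions such as ``$N$ negative'' or ``$P$ positive'' survive), and that it commutes with the operation $\atmCtxt \cdot$ selecting the literals of a context. The axiom rule for a positive literal $p$ is immediate: $\subst p x t$ is again a positive literal, so the conclusion $\DerOSPos{\subst \Gam x t, \subst p x t}{\subst p x t}$ is again an instance of the axiom.

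The cases invoking the decision procedure --- the synchronous rule closing a positive literal against $\Theory{\atmCtxt \Gam, \non p}$ and the structural rule closing an empty succedent against $\Theory{\atmCtxt \Gam}$ --- are where the hypotheses on the procedure enter. There the premiss has the form $\Theory S$, and substitution turns the required premiss into $\Theory{\subst S x t}$, which is supplied verbatim by the \emph{Instantiation} assumption. Using the commutations above, namely $\subst{(\atmCtxt \Gam)} x t = \atmCtxt{(\subst \Gam x t)}$ and $\subst{(\non p)} x t = \non{(\subst p x t)}$, one checks that the substituted procedure call has precisely the shape demanded by the rule.

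The main obstacle is the quantifier cases, where variable capture must be avoided. For the existential rule, with premiss $\DerOSPos \Gam {\subst A {x'} s}$ and conclusion focusing on $\EX {x'} A$, I would first rename the bound variable so that $x' \neq x$ and $x' \notin \FV t$; then $\subst{(\EX {x'} A)} x t = \EX {x'} {\subst A x t}$, while the term-level substitution lemma gives $\subst{(\subst A {x'} s)} x t = \subst{(\subst A x t)} {x'} {\subst s x t}$, so the existential rule applies with the substituted witness $\subst s x t$. For the universal rule the side condition $x' \notin \FV{\Gam, \Del}$ must be re-established after substitution: renaming again so that $x' \neq x$ and $x' \notin \FV t$ ensures $x' \notin \FV{\subst \Gam x t, \subst \Del x t}$ and lets the substitution commute past the quantifier, after which the induction hypothesis and the rule conclude. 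The only genuine content is thus this $\alpha$-conversion bookkeeping, combined with the standard term-level substitution lemma.
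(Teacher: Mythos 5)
Your proposal is correct and follows the same route as the paper, whose proof is simply ``by induction on the derivation of the premiss'': you carry out that induction case by case, correctly identifying that the procedure-call rules are discharged by the \emph{Instantiation} assumption on $\Theory{\cdot}$ (which is precisely why that assumption is postulated) and that the only delicate points are the $\alpha$-renaming and substitution-lemma bookkeeping in the quantifier cases. Nothing is missing; you have merely made explicit the details the paper leaves implicit.
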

\begin{proof}
  By induction on the derivation of the premiss.
\end{proof}

\section{Invertibility of the asynchronous phase}

\begin{lemma}[Invertibility of asynchronous rules]\strut

  All asynchronous rules are invertible in \LKTh.
\end{lemma}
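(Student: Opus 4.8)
The plan is to prove each of the four invertibility statements separately, in each case by induction on the derivation of the conclusion. The key structural observation that makes this manageable is that, in an unfocused sequent $\DerOSNeg \Gam {C,\Delta}$ whose right-hand side is non-empty (here $C$ denotes the principal formula of the rule being inverted), the last rule of any derivation must be an \emph{asynchronous} rule: the synchronous rules conclude focused sequents, and both structural rules require an empty right-hand side. Moreover, the shape of a formula determines which single asynchronous rule can act on it — a top-level $\andN$ or $\orN$, or an outermost quantifier $\FA x A$, forces the corresponding decomposition rule, while a positive formula or a literal forces the store rule. Hence in the induction there are only two kinds of cases to consider.

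First I would treat the \emph{principal} case, where the last rule acts on $C$ itself. By the determinism just noted, this last rule is exactly the rule being inverted, so its premiss(es) are literally the sequents we must produce, and we are done — with one exception. For the $\FA x A$ rule the eigenvariable used in the derivation may differ from the $x$ occurring in the conclusion; here I would invoke the admissibility of instantiation to rename the eigenvariable back to $x$, which is legitimate precisely because it does not occur in $\Gam$ or $\Delta$, so that we recover exactly $\DerOSNeg \Gam {A,\Delta}$ rather than merely an $\alpha$-variant of it.

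Next I would treat the \emph{commutative} cases, where the last rule acts on some other formula $D$ of $\Delta$. Here I apply the induction hypothesis to each premiss — each of which still carries $C$ in its right-hand side and is derived by a strictly smaller derivation — so as to decompose $C$, and then re-apply on $D$ the rule that was originally used. The only thing to check is that the side conditions survive this rearrangement: the positivity/literal condition for a store step on $D$ concerns $D$ alone and is untouched by manipulating $C$, and the eigenvariable condition for a $\FA y {A'}$ step on $D$ continues to hold because decomposing $C$ only ever replaces it by its immediate subformulae (for $\andN$ and $\orN$) or by $\non C$ (for store), none of which introduces free variables outside $\FV C$.

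I expect the $\FA x A$ rule to be the only genuine source of difficulty, for the two reasons just isolated: in the principal case one must recover the exact premiss via the instantiation lemma, and in the commutative cases one must verify that the eigenvariable conditions are preserved across the appeal to the induction hypothesis. The remaining rules ($\andN$, $\orN$, and store) are routine — their premisses appear directly in the principal case and are rebuilt by re-application in the commutative cases, using nothing beyond the induction hypothesis. Finally, since each induction appeals only to the invertibility of the \emph{same} rule on shorter derivations, the four statements can be established independently rather than by mutual induction.
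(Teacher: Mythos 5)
Your proposal is correct and takes essentially the same route as the paper: induction on the derivation of the conclusion, with a case analysis on the last rule (necessarily an asynchronous one, since the right-hand side is non-empty), reading off the premisses in the principal case and pushing the induction hypothesis through the commutative cases before re-applying the displaced rule. If anything you are more explicit than the paper, which silently omits the principal cases and glosses over the eigenvariable bookkeeping for $\FA x A$ that you discharge via the admissibility of instantiation.
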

\begin{proof}
  By induction on the derivation proving the conclusion of the asynchronous rule considered.
  \begin{itemize}
  \item Inversion of $A \andN B$: by case analysis on the last rule actually used
    \begin{itemize}
    \item $\infer{\DerOSNeg{\Gamma}{A\andN B,C\andN D,\Delta'}} {\DerOSNeg{\Gamma}{A\andN B,C,\Delta'} \quad \DerOSNeg{\Gamma}{A\andN B,D,\Delta'}}$

      By induction hypothesis we get\\[5pt]
      $\infer{\DerOSNeg{\Gamma}{A,C\andN D,\Delta'}} {\DerOSNeg{\Gamma}{A,C,\Delta'} \qquad \DerOSNeg{\Gamma}{A,D,\Delta'}}$
      and
      $\infer{\DerOSNeg{\Gamma}{B,C\andN D,\Delta'}} {\DerOSNeg{\Gamma}{B,C,\Delta'} \quad \DerOSNeg{\Gamma}{B,D,\Delta'}}$

    \item
      $\infer{\DerOSNeg{\Gamma}{A\andN B,C\orN D,\Delta'}} {\DerOSNeg{\Gamma}{A\andN B,C, D,\Delta'}}$

      By induction hypothesis we get $\infer{\DerOSNeg{\Gamma}{A,C\orN D,\Delta'}} {\DerOSNeg{\Gamma}{A,C, D,\Delta'}}$ and
      $\infer{\DerOSNeg{\Gamma}{B,C\orN D,\Delta'}} {\DerOSNeg{\Gamma}{B,C, D,\Delta'}}$

    \item
      $\infer[x\notin\FV{\Gam,\Delta',A\andN B}]{\DerOSNeg{\Gamma}{A\andN B,(\FA x C),\Delta'}} {\DerOSNeg{\Gamma}{A\andN B,C,\Delta'}}$

      By induction hypothesis we get\\
      $\infer[x\notin\FV{\Gam,\Delta',A}]{\DerOSNeg{\Gamma}{A,(\FA x C),\Delta'}} {\DerOSNeg{\Gamma}{A,C,\Delta'}}$ and
      $\infer[x\notin\FV{\Gam,\Delta',B}]{\DerOSNeg{\Gamma}{B,(\FA x C),\Delta'}} {\DerOSNeg{\Gamma}{B,C,\Delta'}}$

    \item
      $\infer[C\mbox{ positive or literal}]{\DerOSNeg{\Gamma}{A\andN B,C,\Delta'}} {\DerOSNeg{\Gamma,\non C}{A\andN B,\Delta'}}$  

      By induction hypothesis we get\\
      $\infer[C\mbox{ positive or literal}]{\DerOSNeg{\Gamma}{A,C,\Delta'}} {\DerOSNeg{\Gamma,\non C}{A,\Delta'}}$   and
      $\infer[C\mbox{ positive or literal}]{\DerOSNeg{\Gamma}{ B,C,\Delta'}} {\DerOSNeg{\Gamma,\non C}{B,\Delta'}} $ 

    \end{itemize}

  \item Inversion of $A \orN B$   
    \begin{itemize}
    \item $\infer{\DerOSNeg{\Gamma}{A\orN B,C\andN D,\Delta'}} {\DerOSNeg{\Gamma}{A\orN B,C,\Delta'} \quad \DerOSNeg{\Gamma}{A\orN B,D,\Delta'}}$   

      By induction hypothesis we get  $ \infer{\DerOSNeg{\Gamma}{A,B,C\andN D,\Delta'}} {\DerOSNeg{\Gamma}{A,B,C,\Delta'}\quad\DerOSNeg{\Gamma}{A,B,D,\Delta'}}$ 

    \item
      $\infer{\DerOSNeg{\Gamma}{A\orN B,C\orN D,\Delta'}} {\DerOSNeg{\Gamma}{A\orN B,C, D,\Delta'}}$  

      By induction hypothesis we get $\infer{\DerOSNeg{\Gamma}{A,B,C\orN D,\Delta'}} {\DerOSNeg{\Gamma}{A,B,C, D,\Delta'}}$ 

    \item
      $\infer[x\notin\FV{\Gam,\Delta'}]{\DerOSNeg{\Gamma}{A\orN B,(\FA x C),\Delta'}} {\DerOSNeg{\Gamma}{A\orN B,C,\Delta'}}$   

      By induction hypothesis we get $\infer[x\notin\FV{\Gam,\Delta'}]{\DerOSNeg{\Gamma}{A,B,(\FA x C),\Delta'}} {\DerOSNeg{\Gamma}{A,B,C,\Delta'}}$  

    \item
      $\infer[C\mbox{ positive or literal}]{\DerOSNeg{\Gamma}{A\orN B,C,\Delta'}} {\DerOSNeg{\Gamma,\non C}{A\orN B,\Delta'}}$  

      By induction hypothesis we get  $\infer[C\mbox{ positive or literal}]{\DerOSNeg{\Gamma}{A,B,C,\Delta'}} {\DerOSNeg{\Gamma,\non C}{A,B,\Delta'}}$   
    \end{itemize}

  \item Inversion of $\FA x A$  

    \begin{itemize}
    \item $\infer{\DerOSNeg{\Gamma}{(\FA x A),C\andN D,\Delta'}} {\DerOSNeg{\Gamma}{(\FA x A),C,\Delta'} \quad \DerOSNeg{\Gamma}{(\FA x A),D,\Delta'}}$ 

      By induction hypothesis  we get $\infer[x\notin\FV{\Gam,\Delta'}]{\DerOSNeg{\Gamma}{A,C\andN D,\Delta'}} {\DerOSNeg{\Gamma}{A,C,\Delta'}}$  and ${\DerOSNeg{\Gamma}{A,D,\Delta'}}$ 
      
    \item
      $\infer{\DerOSNeg{\Gamma}{(\FA x A),C\orN D,\Delta'}} {\DerOSNeg{\Gamma}{(\FA x A),C, D,\Delta'}}$  

      By induction hypothesis we get $\infer{\DerOSNeg{\Gamma}{A,C\orN D,\Delta'}} {\DerOSNeg{\Gamma}{A,C, D,\Delta'}}$ 

    \item  
      $\infer[x\notin\FV{\Gam,\Delta'}]{\DerOSNeg{\Gamma}{(\FA x A),(\FA x D),\Delta'}} {\DerOSNeg{\Gamma}{(\FA x A),D,\Delta'}}$  

      By induction hypothesis we get $\infer[x\notin\FV{\Gam,\Delta'}]{\DerOSNeg{\Gamma}{A,(\FA x C),\Delta'}} {\DerOSNeg{\Gamma}{A,C,\Delta'}} $

    \item
      $\infer[C\mbox{ positive or literal}]{\DerOSNeg{\Gamma}{(\FA x A),C,\Delta'}} {\DerOSNeg{\Gamma,\non C}{(\FA x A),\Delta'}}$  

      By induction hypothesis we get $\infer[C\mbox{ positive or literal}]{\DerOSNeg{\Gamma}{A,C,\Delta'}} {\DerOSNeg{\Gamma,\non C}{A,\Delta'}}$     
    \end{itemize}

  \item Inversion of literals and positive formulae ($A$)
    \begin{itemize}
    \item $\infer{\DerOSNeg{\Gamma}{A,C\andN D,\Delta'}} {\DerOSNeg{\Gamma}{A,C,\Delta'} \quad \DerOSNeg{\Gamma}{A,D,\Delta'}}$  

      By induction hypothesis we get $\infer{\DerOSNeg{\Gamma,\non A}{C\andN D,\Delta'}} {\DerOSNeg{\Gamma,\non A}{C,\Delta'} \quad \DerOSNeg{\Gamma,\non A}{D,\Delta'}}$ 

    \item
      $\infer{\DerOSNeg{\Gamma}{A,C\orN D,\Delta'}} {\DerOSNeg{\Gamma}{A,C, D,\Delta'}}$  

      By induction hypothesis $\infer{\DerOSNeg{\Gamma,\non A}{C\orN D,\Delta'}} {\DerOSNeg{\Gamma,\non A}{C, D,\Delta'}}$ 

    \item
      $\infer[x\notin\FV{\Gam,\Delta'}]{\DerOSNeg{\Gamma}{A,(\FA x D),\Delta'}} {\DerOSNeg{\Gamma}{A,D,\Delta'}}$  \\
      By induction hypothesis we get $\infer[x\notin\FV{\Gam,\Delta'}]{\DerOSNeg{\Gamma,\non A}{(\FA x C),\Delta'}} {\DerOSNeg{\Gamma,\non A}{C,\Delta'}}$  

    \item
      $\infer[B\mbox{ positive or literal}]{\DerOSNeg{\Gamma}{A,B,\Delta'}} {\DerOSNeg{\Gamma,\non B}{A,\Delta'}}$

      By induction hypothesis we get $\infer[B\mbox{ positive or literal}]{\DerOSNeg{\Gamma,\non A}{B,\Delta'}} {\DerOSNeg{\Gamma,\non A,\non B}{\Delta'}}$ 
    \end{itemize}
  \end{itemize}
\end{proof}

\section{Cut-elimination}

\begin{theorem}[$\cut_1$ and $\cut_2$]
  The following rules are admissible in \LKTh. 
  \[\begin{array}{c}
    \infer[\cut_1]{\DerOSNeg {\Gam}{\Delta}}
    {\Theory{\atmCtxt\Gam,\non p}\quad \DerOSNeg {\Gam,p}{\Delta}}
    \qquad
    \infer[\cut_2]{\DerOSPos {\Gam}{B}}
    {\Theory{\atmCtxt\Gam,\non p}\quad \DerOSPos {\Gam,p}{B}}
  \end{array}
  \]
\end{theorem}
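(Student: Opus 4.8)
The plan is to prove the two rules simultaneously, by induction on the derivation of the \emph{right} premiss (the sequent $\DerOSNeg{\Gam,p}{\Delta}$ for $\cut_1$, respectively $\DerOSPos{\Gam,p}{B}$ for $\cut_2$), while treating the procedure call $\Theory{\atmCtxt\Gam,\non p}$ as a side hypothesis that is carried along and, wherever the context grows, adjusted by the assumed properties of the procedure. The simultaneous induction is forced by the interplay of the release rule, which turns a focused sequent into an unfocused premiss and so requires $\cut_1$ inside the proof of $\cut_2$, and the decide rule, which does the converse and so requires $\cut_2$ inside the proof of $\cut_1$. Throughout, $p$ is a positive literal, so the only places where its presence in the context is actually exploited are the initial rule when $p$ is the focused literal, the procedure rule for a focused positive literal, and the procedure-closure rule; every other rule merely propagates $p$ through the context.

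I would first dispatch the \emph{propagation} cases, which are routine. For every synchronous or asynchronous rule whose principal formula lies in $\Delta$ (the conjunctions, disjunctions, existentials and universals, and the right introductions of the focused phase), and for the initial rule when the focused literal already occurs in $\Gam$, the cut formula $p$ is untouched: I apply the induction hypothesis to each premiss and reapply the rule. The only subtlety is the store rule $\infer[A\mbox{ positive or literal}]{\DerOSNeg \Gam {A,\Delta}}{\DerOSNeg {\Gam,\non A}{\Delta}}$, which enlarges the context to $\Gam,\non A$: to apply the induction hypothesis to the premiss I must supply $\Theory{\atmCtxt{\Gam,\non A},\non p}$, which I obtain from $\Theory{\atmCtxt\Gam,\non p}$ by Weakening (it is needed precisely when $A$ is a literal, so that $\non A\in\atmCtxt{\Gam,\non A}$; otherwise $\atmCtxt{\Gam,\non A}=\atmCtxt\Gam$ and nothing changes). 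The universal rule's eigenvariable condition is preserved because $\FV\Gam\sub\FV{\Gam,p}$.

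The three essential cases are where the hypothesis $\Theory{\atmCtxt\Gam,\non p}$ pays off. When the right premiss of $\cut_2$ is the initial rule with $p$ itself in focus, i.e. $\DerOSPos{\Gam,p}{p}$, I replace it by the procedure rule $\infer{\DerOSPos\Gam p}{\Theory{\atmCtxt\Gam,\non p}}$, whose premiss is exactly the hypothesis. When it is the procedure rule for a focused positive literal $q$, the premiss is $\Theory{\atmCtxt\Gam,p,\non q}$; setting $S\eqdef\atmCtxt\Gam,\non q$, I have $\Theory{S,p}$, and from $\Theory{\atmCtxt\Gam,\non p}$ I get $\Theory{S,\non p}$ by Weakening, so Consistency yields $\Theory{S}=\Theory{\atmCtxt\Gam,\non q}$ and the procedure rule gives $\DerOSPos\Gam q$. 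Dually, when the right premiss of $\cut_1$ is the procedure-closure rule, its premiss is $\Theory{\atmCtxt\Gam,p}$; together with $\Theory{\atmCtxt\Gam,\non p}$, Consistency (with $S\eqdef\atmCtxt\Gam$) yields $\Theory{\atmCtxt\Gam}$, and the procedure-closure rule gives $\DerOSNeg\Gam{}$. The decide rule $\infer{\DerOSNeg{\Gam,p}{}}{\ldots}$ decides on some $\non P\in\Gam$ (never on $p$, since $\non P$ is negative while $p$ is positive): I apply the $\cut_2$ induction hypothesis to its focused premiss $\DerOSPos{\Gam,p}{P}$ to obtain $\DerOSPos\Gam P$ and then decide, while the release rule is handled symmetrically by the $\cut_1$ induction hypothesis.

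The main obstacle is bookkeeping rather than depth: one must track how the context $\Gam$ — and hence the side hypothesis about the procedure — changes as one moves up through the store and decide rules, and invoke exactly the right closure property (Weakening to align contexts, Consistency to merge the two procedure calls at the two essential procedure cases). No logical connective contributes any real difficulty, since $p$ is a literal and is never the principal formula of a connective rule; the entire content of the cut is concentrated at the initial rule and at the procedure calls, where the assumed Weakening and Consistency properties of $\Theory{\cdot}$ do all the work.
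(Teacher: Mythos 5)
Your proposal is correct and takes essentially the same route as the paper's own proof: the same simultaneous induction on the derivation of the right premiss, with the routine rules handled by propagation, Weakening used in the store case to align the procedure call with the enlarged context, Consistency used at the procedure-closure rule and at the procedure rule for a focused literal, and the initial rule on $p$ itself replaced by a procedure call on the hypothesis $\Theory{\atmCtxt\Gam,\non p}$. Your additional remarks (that the decide rule can never select $p$ since $\non P$ is negative, and that the mutual recursion between $\cut_1$ and $\cut_2$ is driven by the release and decide rules) are accurate and if anything make the case analysis more explicit than the paper's.
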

\begin{proof}
  By simultaneous induction on the derivation of the right premiss.

  We reduce $\cut_ 8$ by case analysis on the last rule used to prove the right premiss.

  \[
  \infer[\cut_1]{\DerOSNeg {\Gam}{B\andN C, \Delta}}
    {
      \Theory {\atmCtxt\Gam, \non p}
      \quad 
      \infer{
        \DerOSNeg {\Gam,p}{B\andN C,\Delta}
      }
      {
        \DerOSNeg {\Gam,p}{B,\Delta}\quad\DerOSNeg {\Gam,p}{C, \Delta}
      }
    }
    \]
    reduces to
    \[
    \infer{
      \DerOSNeg {\Gam}{B\andN C,\Delta}
    }
    {
      \infer[\cut_1]{\DerOSNeg {\Gam}{B,\Delta}}{\Theory {\atmCtxt\Gam, \non p} \quad \DerOSNeg {\Gam,p}{B,\Delta}}
      \quad
      \infer[\cut_1]{\DerOSNeg {\Gam}{C,\Delta}}{\Theory {\atmCtxt\Gam, \non p} \quad \DerOSNeg {\Gam,p}{C,\Delta}}
    }
  \]

  \[\begin{array}{c}
    \infer[\cut_1]{\DerOSNeg {\Gam}{B_1\orN B_2,\Delta}}
    {
      \Theory {\atmCtxt\Gam, \non p}
      \quad 
      \infer{
        \DerOSNeg {\Gam,p}{B_1\orN B_2,\Delta}
      }
      { \DerOSNeg {\Gam,p}{B_1,B_2,\Delta}    }
    }
    \qquad\mbox{reduces to} \qquad
    \infer { \DerOSNeg {\Gam}{{B_1\orN B_2,\Delta}} }
    {  \infer[\cut_1]{\DerOSNeg {\Gam}{B_1, B_2,\Delta}} {
        \Theory {\atmCtxt\Gam, \non p} \quad \DerOSNeg {\Gam,p}{B_1, B_2,\Delta}
      }} 
  \end{array}
  \]

  \[\begin{array}{c}
    \infer[\cut_1]{\DerOSNeg {\Gam}{\FA x B,\Delta}}
    {
      \Theory {\atmCtxt\Gam, \non p}
      \quad 
      \infer{
        \DerOSNeg {\Gam,p}{\FA x B,\Delta}
      }
      { \DerOSNeg {\Gam,p}{B,\Delta}    }
    }
    \qquad\mbox{reduces to} \qquad
    \infer{ \DerOSNeg {\Gam}{{\FA x B,\Delta}} }
    {  \infer[\cut_1] {\DerOSNeg {\Gam}{B,\Delta}} {
        \Theory {\atmCtxt\Gam, \non p} \quad \DerOSNeg {\Gam,p}{B,\Delta}
      }} 
  \end{array}
  \]

  \[\begin{array}{c}
    \infer[\cut_1]{\DerOSNeg {\Gam}{B,\Delta}}
    {
      \Theory {\atmCtxt\Gam, \non p}
      \quad 
      \infer{
        \DerOSNeg {\Gam,p}{B,\Delta}
      }
      { \DerOSNeg {\Gam,p,\non B}{\Delta}    }
    }
    \qquad\mbox{reduces to} \qquad
    \infer { \DerOSNeg {\Gam}{{B,\Delta}} }
    {  \infer[\cut_1]{\DerOSNeg {\Gam, \non B}{\Delta}} {
        \Theory {\atmCtxt{\Gam,\non B}, \non p} \quad \DerOSNeg {\Gam,p,\non B}{\Delta}
      }} 
  \end{array}
  \]
  We have $\Theory {\atmCtxt\Gam, \non p,\non B}$ as we assume the procedure to satisfy weakening.


  If $\non P\in(\Gamma,p)$,
  \[\begin{array}{c}
    \infer[\cut_1]{\DerOSNeg {\Gam}{}}
    {
      \Theory {\atmCtxt\Gam, \non p}
      \quad 
      \infer{
        \DerOSNeg {\Gam,p}{}
      }
      { \DerOSPos {\Gam,p}{P}   }
    }
    \qquad\mbox{reduces to} \qquad
    \infer { \DerOSNeg {\Gam}{} }
    {  \infer[\cut_2]{\DerOSPos {\Gam}{P}} {
        \Theory {\atmCtxt\Gam, \non p} \quad \DerOSPos {\Gam, \non p}{P}
      }} 
  \end{array}
  \]
  as $\non P\in(\Gamma)$.


  \[\begin{array}{c}
    \infer[\cut_1]{\DerOSNeg {\Gam}{}}
    {
      \Theory {\atmCtxt\Gam, \non p}
      \quad 
      \infer{
        \DerOSNeg {\Gam,p}{}
      }
      { \Theory {\atmCtxt\Gam,p}    }
    }
    \qquad\mbox{reduces to} \qquad
    \infer { \DerOSNeg {\Gam}{} }
    {  \Theory {\atmCtxt\Gam}}
  \end{array}
  \]
  using the assumption of consistency.

  We reduce $\cut_2$ again by case analysis on the last rule used to prove the right premiss.

  \[\infer[\cut_2]{\DerOSPos {\Gam}{B\andP C}}
    {
      \Theory {\atmCtxt\Gam, \non p}
      \quad 
      \infer{
        \DerOSPos {\Gam,p}{B\andP C}
      }
      {
        \DerOSPos {\Gam,p}{B}\quad\DerOSPos {\Gam,p}{C}
      }
    }\]
    reduces to
    \[
    \infer{
      \DerOSPos {\Gam}{B\andP C}
    }
    {
      \infer[\cut_2]{\DerOSPos {\Gam}{B}}{\Theory {\atmCtxt\Gam, \non p} \quad \DerOSPos {\Gam,p}{B}}
      \quad
      \infer[\cut_2]{\DerOSPos {\Gam}{C}}{\Theory {\atmCtxt\Gam, \non p} \quad \DerOSPos {\Gam,p}{C}}
    }
  \]

  \[\begin{array}{c}
    \infer[\cut_2]{\DerOSPos {\Gam}{B_1\orP B_2}}
    {
      \Theory {\atmCtxt\Gam, \non p}
      \quad 
      \infer{
        \DerOSPos {\Gam,p}{B_1\orP B_2}
      }
      { \DerOSPos {\Gam,p}{B_i}    }
    }
    \qquad\mbox{reduces to} \qquad
    \infer { \DerOSPos {\Gam}{{B_1\orP B_2}} }
    {  \infer[\cut_2]{\DerOSPos {\Gam}{B_i}} {
        \Theory {\atmCtxt\Gam, \non p} \quad \DerOSPos {\Gam,p}{B_i}
      }} 
  \end{array}
  \]

  \[\begin{array}{c}
    \infer[\cut_2] {\DerOSPos {\Gamma}{\EX x B}}
    {
      \Theory {\atmCtxt\Gam, \non p}
      \quad 
      \infer{
        \DerOSPos {\Gam,p}{\EX x B}
      }
      { \DerOSPos {\Gam,p} {\subst B x t}    }
    }
    \qquad\mbox{reduces to} \qquad
    \infer { \DerOSPos {\Gam}{{\EX x B}} }
    {  \infer[\cut_2]{\DerOSPos {\Gam} {\subst B x t}} {
        \Theory {\atmCtxt\Gam, \non p} \quad \DerOSPos {\Gam,p}{\subst B x t}
      }} 
  \end{array}
  \]

  \[\begin{array}{c}
    \infer[\cut_2]{\DerOSPos {\Gam}{N}}
    {
      \Theory {\atmCtxt\Gam, \non p}
      \quad 
      \infer{
        \DerOSPos {\Gam,p}{N}
      }
      { \DerOSNeg {\Gam,p}{N}    }
    }
    \qquad\mbox{reduces to} \qquad
    \infer { \DerOSPos {\Gam}{{N}} }
    {  \infer[\cut_1]{\DerOSNeg {\Gam}{N}} {
        \Theory {\atmCtxt\Gam, \non p} \quad \DerOSNeg {\Gam,p}{N}
      }} 
  \end{array}
  \]

  If $p'\in \Gamma,p$,
  \[\begin{array}{c}
    \infer[\cut_2]{\DerOSPos {\Gam}{p'}}
    {
      \Theory {\atmCtxt\Gam, \non p}
      \quad 
      \infer{
        \DerOSPos {\Gam,p}{p'}
      }
      {     }
    }
    \qquad\mbox{reduces to} \qquad
    \infer{ \DerOSPos {\Gam}{p'} }{}
    \qquad$if$ \qquad p'\in\Gam\\
    \qquad\mbox{reduces to} \qquad
    \infer{ \DerOSPos {\Gam}{p'} }{\Theory {\atmCtxt\Gam, \non p}}
    \qquad$if$ \qquad p'=p
  \end{array}
  \]

  Finally,
  \[\begin{array}{c}
    \infer[\cut_2]{\DerOSPos {\Gam}{p'}}
    {
      \Theory {\atmCtxt\Gam, \non p}
      \quad 
      \infer{
        \DerOSPos {\Gam,p}{p'}
      }
      {\Theory{\atmCtxt\Gam,p,\non {p'}}}
    }
    \qquad\mbox{reduces to} \qquad
    \infer{ \DerOSPos {\Gam}{p'} }{\Theory {\atmCtxt\Gam, \non {p'}}}
    \end{array}
  \]
since weakening gives $\Theory {\atmCtxt\Gam, \non p,\non {p'}}$ and consistency then gives $\Theory {\atmCtxt\Gam, \non {p'}}$.
\end{proof}

\begin{theorem}[$\cut_3$, $\cut_4$ and $\cut_5$]
  The following rules are admissible in \LKTh. 
  \[\begin{array}{c}
    \infer[\cut_3]{\DerOSNeg {\Gam}{\Del}}
    {\DerOSPos \Gam A\quad \DerOSNeg {\Gam}{\non A,\Del}}
    \\\\
    \infer[\cut_4]{\DerOSNeg {\Gam}{\Del}}
    {\DerOSNeg \Gam {N}\quad \DerOSNeg {\Gam,N}{\Del}}
    \qquad
    \infer[\cut_5]{\DerOSPos {\Gam}{B}}
    {\DerOSNeg \Gam {N}\quad \DerOSPos {\Gam,N}{B}}
  \end{array}
  \]
\end{theorem}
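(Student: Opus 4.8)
The plan is to establish $\cut_3$, $\cut_4$ and $\cut_5$ by one simultaneous induction, with the right premiss as the active derivation. As a well-founded measure I would take the lexicographic triple $(|C|,r,h)$, where $|C|$ is the number of connectives of the cut formula (the formula $A$ for $\cut_3$, and $N$ for $\cut_4$ and $\cut_5$), $h$ is the height of the right premiss derivation (only relevant for $\cut_4$ and $\cut_5$), and $r$ is a rank assigning $0$ to $\cut_3$ on a \emph{positive} cut formula, $1$ to $\cut_4$ and $\cut_5$, and $2$ to $\cut_3$ on a \emph{negative} cut formula. The role of $r$ surfaces only in the principal case of $\cut_4$ below.

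For $\cut_3$ I would do a case analysis on the cut formula $A$, in each case exploiting the invertibility lemma to expose the decomposition of $\non A$ inside the right premiss $\DerOSNeg \Gam {\non A,\Del}$. When $A$ is a positive literal $p$, inverting the storage rule on the negative literal $\non p$ turns the right premiss into $\DerOSNeg {\Gam,p}\Del$; the left premiss $\DerOSPos \Gam p$ is then either an axiom, where admissibility of contraction finishes the job, or a procedure call $\Theory{\atmCtxt\Gam,\non p}$, where the already-admissible $\cut_1$ applies. When $A$ is $B\andP C$, $B\orP C$ or $\EX x B$, inverting the dual connective (respectively $\orN$, $\andN$, or the universal quantifier) in $\non A$ reduces the cut to one or two instances of $\cut_3$ on the strictly smaller subformulae; for the existential I would additionally invoke admissibility of instantiation (after $\alpha$-renaming $x$ fresh for $\Gam,\Del$) to pass from $\subst {\non B} x t$ back to the witness used on the left. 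Finally, when $A$ is a negative formula $N$, the left premiss must end with the focus-release rule and so yields $\DerOSNeg \Gam N$; inverting the storage rule on the positive formula $\non N$ gives $\DerOSNeg {\Gam,N}\Del$, and $\cut_4$ concludes. This last step keeps $|C|$ fixed but drops the rank from $2$ to $1$.

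For $\cut_4$ and $\cut_5$ I would induct on the height $h$ of the right premiss. Every rule whose principal formula is not the cut formula $N$ simply permutes above the cut: I apply the induction hypothesis to the immediate subderivations (restoring the context with admissibility of weakening wherever a storage rule has enlarged $\Gam$) and reapply the same rule. Closing a branch by the structural rule $\Theory{\atmCtxt{\Gam,N}}$ is immediate, since $\atmCtxt{\Gam,N}=\atmCtxt\Gam$ ($N$ being no literal), and a decision focusing on a formula of $\Gam$ other than $N$ is handled by $\cut_5$ on a shorter derivation; in $\cut_5$, a release of focus on a negative $B$ is passed to $\cut_4$ on the shorter subderivation. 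The one principal case is a decision in $\cut_4$ focusing on $N$ itself: there the right premiss ends with $\DerOSNeg {\Gam,N}{}$ inferred from $\DerOSPos {\Gam,N}{\non N}$, and I would first use $\cut_5$ (shorter derivation) with the left premiss $\DerOSNeg \Gam N$ to obtain $\DerOSPos \Gam {\non N}$, and then apply $\cut_3$ with cut formula $\non N$ to $\DerOSPos \Gam {\non N}$ and $\DerOSNeg \Gam N$ to conclude $\DerOSNeg \Gam {}$.

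The delicate point — and the reason for the rank $r$ — is exactly this principal case: it leaves $|C|$ unchanged and re-enters $\cut_3$, so an induction on the cut formula or on derivation height alone would not terminate. What rescues the argument is that the re-entered $\cut_3$ acts on the \emph{positive} formula $\non N$, and $\cut_3$ on a positive formula can only spawn cuts on its strict subformulae (its own cases never call $\cut_4$ or $\cut_5$ at the same size); giving it the lowest rank $0$ therefore makes $(|C|,r,h)$ strictly decrease at every recursive call. Verifying this decrease uniformly across all the permutation cases, together with the bookkeeping of contexts under weakening, is the part I expect to demand the most care.
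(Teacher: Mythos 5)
Your overall architecture is precisely the paper's: the same simultaneous induction, with your rank $r$ being just a repackaging of the paper's second lexicographic component (a positive cut formula counts as smaller than a negative one of equal size); the same treatment of $\cut_3$ by the last rule of the left premiss combined with invertibility of the asynchronous phase; admissibility of instantiation for the quantifier case; the reduction of $\cut_3$ on a negative $N$ to $\cut_4$; and the principal decision case of $\cut_4$ reduced to $\cut_5$ (on a shorter right premiss) followed by $\cut_3$ on the positive formula $\non N$ --- which is exactly the paper's reduction, and your remark that $\cut_3$ on a positive formula only spawns cuts on strict subformulae is exactly why the measure is well founded.

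There is, however, one genuine gap: your parenthetical ``$N$ being no literal'' is unjustified. Negative formulae include the negative literals $\non p$, and $\cut_4$ with $N=\non p$ is unavoidable --- it is allowed by the statement, and your own $\cut_3$ case for a negative cut formula hands exactly this instance to $\cut_4$ whenever the original cut formula is a literal. For such $N$ we have $\atmCtxt{\Gam,N}=\atmCtxt\Gam,\non p\neq\atmCtxt\Gam$, so when the right premiss of $\cut_4$ ends with the procedure call $\Theory{\atmCtxt\Gam,\non p}$, the branch is not ``immediate'': the cut literal really is passed to the procedure and cannot simply be erased (the call with $\non p$ may succeed while the one without it fails). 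The paper devotes a separate case to this. Since the left premiss $\DerOSNeg \Gam {\non p}$ has the literal $\non p$ as its only right-hand formula, its last rule must be the store rule, so it comes from a derivation of $\DerOSNeg {\Gam,p}{}$; the already-established $\cut_1$ applied to $\Theory{\atmCtxt\Gam,\non p}$ and $\DerOSNeg {\Gam,p}{}$ then yields $\DerOSNeg \Gam {}$. This is the second point --- besides your literal case of $\cut_3$ --- where the theory-level cuts, and through the proof of $\cut_1$ the consistency assumption on the procedure, feed into the argument; your proposal wires them in only once. The same vigilance is required in the terminal cases of $\cut_5$, which your text does not discuss: with $N=\non q$ a literal, the procedure-call case of the right premiss produces $\Theory{\atmCtxt\Gam,\non q,\non {p'}}$ rather than $\Theory{\atmCtxt\Gam,\non {p'}}$ (the paper displays only the non-literal situation there; e.g.\ the diagonal instance $p'=q$ goes through by the contraction assumption on the procedure), so these cases cannot be dismissed as bookkeeping either.
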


\begin{proof}
  By simultaneous induction on the following lexicographical measure:
  \begin{itemize}
  \item the size of the cut-formula ($A$ or $N$)
  \item the fact that the cut-formula ($A$ or $N$) is positive or negative\\
    (if of equal size, a positive formula is considered smaller than a negative formula)
  \item the height of the derivation of the right premiss
  \end{itemize}

  Weakenings and contractions (as they are admissible in the system) are implicitly used throughout this proof.

  In order to eliminate $\cut_3$, we analyse which rule is used to prove the left premiss. We then use invertibility of the negative phase so that the last rule used in the right premiss is its dual one.

  \[\begin{array}{c}
    \infer[\cut_3]{\DerOSNeg{\Gamma}{\Delta}}
    {\infer{\DerOSPos{\Gamma}{A \andP B}} {{\DerOSPos{\Gamma}{A}} \quad {\DerOSPos{\Gamma}{B}}} \qquad \infer{\DerOSNeg{\Gamma}{A \orN B,\Delta}}{\DerOSNeg{\Gamma}{\non A,\non B},\Delta}}
    \quad \mbox{reduces to} \quad 
    \infer[\cut_3]{\DerOSNeg{\Gamma}{\Delta}}
    {{\DerOSPos{\Gamma}{B}} \quad {\infer[\cut_3]{\DerOSNeg{\Gamma}{\non B,\Delta}} {{\DerOSPos{\Gamma}{A}} \quad {\DerOSNeg{\Gamma}{\non A,\non B},\Delta}}}}
  \end{array}
  \]


  \[\begin{array}{c}
    \infer[\cut_3]{\DerOSNeg{\Gamma}{\Delta}}
    {\infer{\DerOSPos{\Gamma}{A_1 \orP A_2}} {\DerOSPos{\Gamma}{A_i}}  \qquad \infer{\DerOSNeg{\Gamma}{A_1 \andN A_2,\Delta}}{{\DerOSNeg{\Gamma}{\non A_1},\Delta} \quad {\DerOSNeg{\Gamma}{\non A_2},\Delta}}} \quad
    \mbox{reduces to} \qquad
    \infer[\cut_3]{\DerOSNeg{\Gamma}{\Delta}}
    {\DerOSPos{\Gamma}{A_i} \qquad {\DerOSNeg{\Gamma}{\non A_i,\Delta}}}
  \end{array}\]

  \[\begin{array}c
    \infer[\cut_3]{\DerOSNeg{\Gamma}{\Delta}}
    {\infer{\DerOSPos{\Gamma}{\EX x A}} {\DerOSPos{\Gamma}{\subst A x t}} \qquad \infer{\DerOSNeg{\Gamma}{(\FA x \non A),\Delta}}{\DerOSNeg{\Gamma}{\non A},\Delta}} \quad
    \mbox{reduces to} \qquad \infer[\cut_3]{\DerOSNeg{\Gamma}{\Delta}}
    {{\DerOSPos{\Gamma}{\subst A x t}} \qquad \iinfer[ x \notin \FV{\Gamma,\Delta}]{\DerOSNeg{\Gamma}{(\subst {\non A} x t ),\Delta}}{\DerOSNeg{\Gamma}{\non A},\Delta}}
  \end{array}
  \]
  using the admissibility of instantiation.

  \[
  \infer[\cut_3]{\DerOSNeg{\Gamma}{\Delta}}
  {\infer{\DerOSPos{\Gamma}{N}} {\DerOSNeg{\Gamma}{N}} \qquad \infer{\DerOSNeg{\Gamma}{(\non N),\Delta}}{\DerOSNeg{\Gamma, N }{\Delta}}} \quad
  \mbox{reduces to} \qquad \infer[\cut_4]{\DerOSNeg{\Gamma}{\Delta}}
  {{\DerOSNeg{\Gamma}{N}} \qquad {\DerOSNeg{\Gamma, N}{\Delta}}}
  \]
  We will describe below how $\cut_4$  is reduced.

  \[
  \infer[\cut_3]{\DerOSNeg{\Gamma,p}{\Delta}}
  {\infer{\DerOSPos{\Gamma,p}{p}} {} \qquad \infer{\DerOSNeg{\Gamma,p}{(\non p),\Delta}}{\DerOSNeg{\Gamma, p,p }{\Delta}}} \quad
  \mbox{reduces to} \qquad \iinfer{\DerOSNeg{\Gamma,p}{\Delta}}
  {\DerOSNeg{\Gamma, p,p}{\Delta}}
  \]
  using the admissibility of contraction.

  \[
  \infer[\cut_3]{\DerOSNeg{\Gamma}{\Delta}}
  {\infer{\DerOSPos{\Gamma}{p}} {{\Theory{\atmCtxt\Gam,\non p}}} \qquad \infer{\DerOSNeg{\Gamma}{(\non p),\Delta}}{\DerOSNeg{\Gamma, p }{\Delta}}} \quad
  \mbox{reduces to} \qquad \infer[\cut_1]{\DerOSNeg{\Gamma}{\Delta}}
  {{\Theory{\atmCtxt\Gam,\non p}}  \qquad {\DerOSNeg{\Gamma, p}{\Delta}}}
  \]


  In order to reduce $\cut_4$, we analyse which rule is used to prove the right premiss.


  \[\begin{array}{c}
    \infer[\cut_4]{\DerOSNeg {\Gam}{}}
    {\DerOSNeg \Gam {N}\quad \infer{\DerOSNeg {\Gam,N}{}}{\Theory{\atmCtxt\Gam}}}
    \qquad\mbox{reduces to} \qquad
    \infer{\DerOSNeg {\Gam}{}}{\Theory{\atmCtxt\Gam}}
  \end{array}
  \]
  if $N$ is not an literal (hence, it is not passed on to the procedure).

  \[\begin{array}{c}
    \infer[\cut_4]{\DerOSNeg {\Gam}{}}
    {\infer{\DerOSNeg \Gam {\non p}}{\DerOSNeg {\Gam, p}{}}\quad \infer{\DerOSNeg {\Gam,\non p}{}}{\Theory{\atmCtxt\Gam,\non p}}}
    \qquad\mbox{reduces to} \qquad
    \infer[\cut_1]{\DerOSNeg {\Gam}{}}
    {
      \Theory{\atmCtxt\Gam,\non p}
      \qquad
      \DerOSNeg {\Gam, p}{}
    }
  \end{array}
  \]
  if $\non p$ is an literal passed on to the procedure.

  \[\begin{array}{c}
    \infer[\cut_4]{\DerOSNeg {\Gam}{}}
    {\DerOSNeg \Gam {N}\quad \infer{\DerOSNeg {\Gam,N}{}}{\DerOSPos {\Gam,N}{\non N}}}
    \qquad\mbox{reduces to} \qquad
    \infer[\cut_3]{\DerOSNeg {\Gam}{}}{\infer[\cut_5]{\DerOSPos {\Gam}{\non N}}{\DerOSNeg \Gam {N}\quad \DerOSPos {\Gam,N}{\non N}}\qquad \DerOSNeg \Gam {N}}
  \end{array}
  \]

  \[\begin{array}{c}
    \infer[\cut_4]{\DerOSNeg {\Gam,\non P}{}}
    {\DerOSNeg {\Gam,\non P} {N}\quad \infer{\DerOSNeg {\Gam,\non P,N}{}}{\DerOSPos {\Gam,\non P,N}{P}}}
    \qquad\mbox{reduces to} \qquad
    \infer{\DerOSNeg {\Gam,\non P}{}}{\infer[\cut_5]{\DerOSPos {\Gam,\non P}{P}}{\DerOSNeg {\Gam,\non P} {N}\quad \DerOSPos {\Gam,\non P,N}{P}}}
  \end{array}
  \]

  \[
  \infer[\cut_4]{\DerOSNeg {\Gam}{B\andN C,\Delta}}
    {
      \DerOSNeg {\Gam} {N}
      \quad 
      \infer{
        \DerOSNeg {\Gam,N}{B\andN C,\Delta}
      }
      {
        \DerOSNeg {\Gam,N}{B,\Delta}\quad\DerOSNeg {\Gam,N}{C,\Delta}
      }
    }
    \]
    reduces to
    \[
    \infer{
      \DerOSNeg {\Gam}{B\andN C,\Delta}
    }
    {
      \infer[\cut_4]{\DerOSNeg {\Gam}{B,\Delta}}{\DerOSNeg {\Gam} {N}\quad \DerOSNeg {\Gam,N}{B,\Delta}}
      \quad
      \infer[\cut_4]{\DerOSNeg {\Gam}{C,\Delta}}{\DerOSNeg {\Gam} {N}\quad \DerOSNeg {\Gam,N}{C,\Delta}}
    }
  \]

  \[\begin{array}{c}
    \infer[\cut_4]{\DerOSNeg {\Gam}{B\orN C,\Delta}}
    {
      \DerOSNeg {\Gam} {N}
      \quad 
      \infer{
        \DerOSNeg {\Gam,N}{B\orN C,\Delta}
      }
      {
        \DerOSNeg {\Gam,N}{B,C,\Delta}
      }
    }
    \qquad\mbox{reduces to} \qquad
    \infer{
      \DerOSNeg {\Gam}{B\orN C,\Delta}
    }
    {
      \infer[\cut_4]{\DerOSNeg {\Gam}{B,C,\Delta}}{\DerOSNeg {\Gam} {N}\quad \DerOSNeg {\Gam,N}{B,C,\Delta}}
    }
  \end{array}
  \]

  \[\begin{array}{c}
    \infer[\cut_4]{\DerOSNeg {\Gam}{\forall x B,\Delta}}
    {
      \DerOSNeg {\Gam} {N}
      \quad 
      \infer{
        \DerOSNeg {\Gam,N}{\forall x B,\Delta}
      }
      {
        \DerOSNeg {\Gam,N}{B,\Delta}
      }
    }
    \qquad\mbox{reduces to} \qquad
    \infer{
      \DerOSNeg {\Gam}{\forall x B,\Delta}
    }
    {
      \infer[\cut_4]{\DerOSNeg {\Gam}{B,\Delta}}{\DerOSNeg {\Gam} {N}\quad \DerOSNeg {\Gam,N}{B,\Delta}}
    }
  \end{array}
  \]

  \[\begin{array}{c}
    \infer[\cut_4]{\DerOSNeg {\Gam}{B,\Delta}}
    {
      \DerOSNeg {\Gam} {N}
      \quad 
      \infer{
        \DerOSNeg {\Gam,N}{B,\Delta}
      }
      {
        \DerOSNeg {\Gam,N,\non B}{\Delta}
      }
    }
    \qquad\mbox{reduces to} \qquad
    \infer{
      \DerOSNeg {\Gam}{B,\Delta}
    }
    {
      \infer[\cut_4]{\DerOSNeg {\Gam,\non B}{\Delta}}{\DerOSNeg {\Gam,\non B} {N}\quad \DerOSNeg {\Gam,N,\non B}{\Delta}}
    }
  \end{array}
  \]
  using weakening, and if $B$ is positive or a negative literal.

  We have reduced all cases of $\cut_4$; we now reduce the cases for $cut_5$ (again, by case analysis on the last rule used to prove the right premiss).

  \[\begin{array}{c}
    \infer[\cut_5]{\DerOSPos {\Gam}{B\andP C}}
    {
      \DerOSNeg {\Gam} {N}
      \quad 
      \infer{
        \DerOSPos {\Gam,N}{B\andP C}
      }
      {
        \DerOSPos {\Gam,N}{B}\quad\DerOSPos {\Gam,N}{C}
      }
    }
    \qquad\mbox{reduces to} \qquad
    \infer{
      \DerOSPos {\Gam}{B\andP C}
    }
    {
      \infer[\cut_5]{\DerOSPos {\Gam}{B}}{\DerOSNeg {\Gam} {N}\quad \DerOSPos {\Gam,N}{B}}
      \quad
      \infer[\cut_5]{\DerOSPos {\Gam}{C}}{\DerOSNeg {\Gam} {N}\quad \DerOSPos {\Gam,N}{C}}
    }
  \end{array}
  \]


  \[\begin{array}{c}
    \infer[\cut_5]{\DerOSPos {\Gam}{B_1\orP B_2}}
    {
      \DerOSNeg {\Gam} {N}
      \quad 
      \infer{
        \DerOSPos {\Gam,N}{B_1\orP B_2}
      }
      { \DerOSPos {\Gam,N}{B_i}    }
    }
    \qquad\mbox{reduces to} \qquad
    \infer[\cut_5] { \DerOSPos {\Gam}{B_i} }
    {
      {\DerOSNeg {\Gam} {N}\quad \DerOSPos {\Gam,N}{B_i}}
    }
  \end{array}
  \]
  

  \[\begin{array}{c}
    \infer[\cut_5]{\DerOSPos {\Gam}{{\EX x B}}}
    {
      \DerOSNeg {\Gam} {N}
      \quad 
      \infer{
        \DerOSPos {\Gam,N}{{\EX x B}}
      }
      { \DerOSPos {\Gam,N}{{\subst B x t}}    }
    }
    \qquad\mbox{reduces to} \qquad
    \infer { \DerOSPos {\Gam}{{\EX x B}} }
    {   \infer[\cut_5] { \DerOSPos {\Gam}{{\subst B x t}} }
      {\DerOSNeg {\Gam} {N}\quad \DerOSPos {\Gam,N}{{\subst B x t}}
      } }
  \end{array}
  \]


  \[\begin{array}{c}
    \infer[\cut_5]{\DerOSPos {\Gam}{N'}}
    {
      \DerOSNeg {\Gam} {N}
      \quad 
      \infer{
        \DerOSPos {\Gam,N}{N'}
      }
      { \DerOSNeg {\Gam,N}{N'}    }
    }
    \qquad\mbox{reduces to} \qquad
    \infer{\DerOSPos {\Gam}{N'}}
    {
      \infer[\cut_4]{\DerOSNeg {\Gam}{N'}}
      {
        \DerOSNeg {\Gam}{N}
        \quad
        \DerOSNeg {\Gam,N}{N'}
      }
    }
  \end{array}
  \]

  \[\begin{array}{c}
    \infer[\cut_5]{\DerOSPos {\Gam}{p}}
    {
      \DerOSNeg {\Gam} {N}
      \quad 
      \infer{\DerOSPos{\Gamma,N}{p}} {}
    }
    \qquad\mbox{reduces to} \qquad
    \infer { \DerOSPos {\Gam}{p}} {}
    
  \end{array}
  \]

  since $p$ has to be in $\Gamma$.

  \[\begin{array}{c}   

    \infer[\cut_5]{\DerOSPos {\Gam}{p}}
    {
      \DerOSNeg {\Gam} {N}
      \quad 
      \infer{\DerOSPos{\Gamma,N}{p}} {\Theory{\atmCtxt\Gamma, \non p}}
    }
    \qquad\mbox{reduces to} \qquad
    \infer{\DerOSPos {\Gam}{p}}
    {
      \Theory {\atmCtxt\Gam, \non p}
    }
  \end{array}
  \]
  
\end{proof}

\newpage
\begin{theorem}[$\cut_6$, $\cut_7$, $\cut_8$, and $\cut_9$]
  The following rules are admissible in \LKTh. 
  \[
  \begin{array}{c}
    \infer[\cut_6]{\DerOSNeg {\Gam}{\Del}}
    {\DerOSNeg \Gam {N,\Del}\quad \DerOSNeg {\Gam,N}{\Del}}
    \qquad
    \infer[\cut_7] {\DerOSNeg {\Gamma}{\Del}}
    {\DerOSNeg{\Gamma}{A,\Del} \quad \DerOSNeg{\Gamma}{\non A,\Del}}  
\\\\
    \infer[\cut_8] {\DerOSNeg {\Gamma}{\Del}}
    {\DerOSNeg{\Gamma,l}{\Del} \quad \DerOSNeg{\Gamma,\non l}{\Del}}  
\qquad
  \infer[\cut_9]{\DerNeg {\Gam} {\Del}{} {}}
  {
    \DerNeg {\Gam, l_1,\ldots,l_n} {\Del}{} {}
    \quad
    \DerNeg {\Gam, (\non l_1\orN\ldots\orN \non l_n)} {\Del}{} {}
  } 
  \end{array}
  \]
\end{theorem}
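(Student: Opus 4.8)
The plan is to prove the four rules one after another, rather than by a single simultaneous induction, since each reduces to a cut already established (either $\cut_4$ from the previous theorem, or an earlier cut of the present theorem) together with the invertibility lemma and forward applications of the asynchronous rules. Concretely, $\cut_6$ is proved by an induction of its own and reduced at its base to $\cut_4$; then $\cut_7$ is derived from $\cut_6$, $\cut_8$ from $\cut_7$, and $\cut_9$ from $\cut_6$, with no circularity.

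For $\cut_6$ I would induct on the derivation $\pi$ of the right premiss $\DerOSNeg{\Gam,N}{\Del}$. The decisive observation is that the cut-formula $N$ sits in the \emph{context} on the right, where it can be activated only by the focusing structural rule, and that rule applies only to an unfocused sequent whose right-hand list is empty. Hence, as long as $\Del\neq\es$, the last rule of $\pi$ is necessarily an asynchronous rule acting on a formula of $\Del$, and never on $N$; so no genuine principal case on the cut-formula ever arises. In each such case I push the cut upwards: I apply invertibility to the left premiss $\DerOSNeg{\Gam}{N,\Del}$ so that its active formula matches the one decomposed by $\pi$, invoke the induction hypothesis ($\cut_6$ on the strictly shorter subderivations of $\pi$, with $N$ still threaded through the context), and reassemble with the same asynchronous rule; for the $\FA x {}$-rule the freshness condition is preserved since $x\notin\FV{\Gam,N,\Del}$ entails $x\notin\FV{\Gam,\Del}$. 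When finally $\Del=\es$ the two premisses are exactly $\DerOSNeg{\Gam}{N}$ and $\DerOSNeg{\Gam,N}{}$, which is precisely an instance of $\cut_4$; so the base case is discharged by the already-proven $\cut_4$.

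The remaining three are structural. For $\cut_7$ I split on the polarity of $A$: exactly one of $A,\non A$ is negative, say $N$. The premiss proving the positive member is turned into $\DerOSNeg{\Gam,N}{\Del}$ by invertibility of the asynchronous rule for positive-or-literal formulae (which moves that formula into the context as its negation), and together with the premiss $\DerOSNeg{\Gam}{N,\Del}$ this is an instance of $\cut_6$. For $\cut_8$, since $l$ and $\non l$ are literals, one forward application of the same asynchronous rule turns $\DerOSNeg{\Gam,\non l}{\Del}$ into $\DerOSNeg{\Gam}{l,\Del}$ and turns $\DerOSNeg{\Gam,l}{\Del}$ into $\DerOSNeg{\Gam}{\non l,\Del}$; these are the two premisses of $\cut_7$ with cut-formula $l$. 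For $\cut_9$ the clause $C=\non l_1\orN\cdots\orN\non l_n$ is a negative formula, and from $\DerOSNeg{\Gam,l_1,\ldots,l_n}{\Del}$ I obtain $\DerOSNeg{\Gam}{C,\Del}$ by first moving each $l_i$ out of the context with the positive-or-literal rule (yielding $\DerOSNeg{\Gam}{\non l_1,\ldots,\non l_n,\Del}$) and then applying the $\orN$-rule $n-1$ times; together with the second premiss $\DerOSNeg{\Gam,C}{\Del}$ this is an instance of $\cut_6$.

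The only step demanding real work is $\cut_6$, and even there the usual obstacle is defused: because the cut-formula is quarantined in the context on the right, the interaction between the two derivations is postponed until the right-hand list $\Del$ is exhausted, at which moment it is handled wholesale by $\cut_4$. The main things to watch are therefore bookkeeping issues rather than conceptual ones, namely choosing the negative member of $\{A,\non A\}$ as the principal formula fed to $\cut_6$ and $\cut_7$, and preserving the freshness side-condition of the quantifier rule when pushing the cut through the asynchronous phase.
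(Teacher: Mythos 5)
Your proposal is correct and takes essentially the same route as the paper: you prove $\cut_6$ by pushing the cut through the asynchronous phase (invertibility of the left premiss plus re-application of the asynchronous rules) down to the base case $\Del=\es$ discharged by $\cut_4$, and you then derive $\cut_7$, $\cut_8$ and $\cut_9$ structurally, just as the paper does. The only deviations are immaterial: you induct on the height of the derivation of the right premiss where the paper inducts on the multiset $\Del$, and you obtain $\cut_9$ directly from $\cut_6$ where the paper routes it through $\cut_7$ (which itself unfolds immediately to $\cut_6$).
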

\begin{proof}
$\cut_6$ is proved admissible by induction on the multiset $\Del$: the base case is the admissibility of $\cut_4$, and the other cases just require the inversion of the connectives in $\Del$.

For $\cut_7$, we can assume without loss of generality (swapping $A$ and $\non A$) that $A$ is negative.
Applying inversion on $\DerOSNeg{\Gamma}{\non A,\Del}$ gives a proof of $\DerOSNeg{\Gamma,A}{\Del}$, and 
$\cut_7$ is then obtained by $\cut_6$:
\[    \infer[\cut_6] {\DerOSNeg {\Gamma}{\Del}}
    {\DerOSNeg{\Gamma}{A,\Del} \quad 
      \DerOSNeg{\Gamma,A}{\Del}
    }
\]

$\cut_8$ is obtained as follows:
\[    \infer[\cut_7] {\DerOSNeg {\Gamma}{\Del}}
    {\infer
      {\DerOSNeg{\Gamma}{l,\Del}}
      {\DerOSNeg{\Gamma,\non l}{\Del}}
      \quad 
      \infer
      {\DerOSNeg{\Gamma}{\non l,\Del}}
      {\DerOSNeg{\Gamma,l}{\Del}}
    }
\]

$\cut_9$ is obtained as follows:
\[    \infer[\cut_7] {\DerOSNeg {\Gamma}{\Del}}
{\infer
  {\DerOSNeg{\Gamma}{l_1\andP\ldots\andP l_n,\Del}}
  {\DerOSNeg{\Gamma,(\non l_1\orN\ldots\orN\non l_n)}{\Del}}
  \quad
  \infer{\DerOSNeg{\Gamma}{(\non l_1\orN\ldots\orN\non l_n),\Del}}
  {
    \Infer
    {\DerOSNeg{\Gamma}{\non l_1,\ldots,\non l_n,\Del}}
    {\DerOSNeg{\Gamma,l_1,\ldots,l_n}{\Del}}
  }
}
\]

\end{proof}

\section{Conclusion}

It is worth noting that an instance of such a theory is the theory where $\Theory{S}$ holds if and only if there is a literal $p\in S$ such that $\non p\in S$.

We proved the admissibility of $\cut_8$ and $\cut_9$ as they are used to simulate the \DPLLTh\ procedure~\cite{Nieuwenhuis06} as the proof-search mechanism of \LKTh.

Further work will consist in using the cut-admissibility results to:
\begin{itemize}
\item show that changing the polarities of the connectives and predicates that are present in a sequent, does not change the provability of that sequent in \LKTh;
\item prove the completeness of \LKTh\ with respect to the standard notion of provability in first-order logic, working in a particular theory $\mathcal T$ for which we have a (sound and complete) decision procedure;
\item show how the \DPLLTh\ procedure can be simulated in \LKTh\ (with backtracking as well as with backjumping and lemma learning).
\end{itemize}

\bibliographystyle{Common/good}
\bibliography{Common/abbrev-short,Common/Main,Common/crossrefs}

\end{document}